\newcommand{\bbeta}{\boldsymbol{\beta}}
\newcommand{\bgamma}{\boldsymbol{\gamma}}
\newcommand{\bepsilon}{\boldsymbol{\varepsilon}}
\newcommand{\bmu}{\boldsymbol{\mu}}
\newcommand{\gxxg}{\mathbf{G}\mathbf{X}^T\mathbf{X}\mathbf{G}}
\newcommand{\xtx}{\mathbf{X}^T\mathbf{X}}
\newcommand{\invmat}{\left(\xtx+\frac{1}{\kappa}\gxxg\right)^{-1}}
\newcommand{\xjtxj}{\mathbf{x}_j^T\mathbf{x}_j}
\newcommand{\xjty}{\mathbf{x}_j^T\mathbf{y}}
\newtheorem{lemma}{Lemma}[section]
\newtheorem{thm}{Theorem}[section]
\newtheorem{coro}{Corollary}[section]
\begin{document}
\date{}

\title{Bayesian Variable Selection for Linear Regression\\ with the $\kappa$-$\mathbf{G}$ Priors}

\author{
{\rm Zichen Ma}\\
Department of Statistics\\
University of South Carolina\\
\text{zichen@email.sc.edu}\\
\and
{\rm Ernest Fokou\'e}\\
School of Mathematical Science\\
Rochester Institute of Technology\\
\text{epfeqa@rit.edu}
}

\maketitle

\thispagestyle{empty}
\numberwithin{equation}{section}

\subsection*{Abstract}
In this paper, we introduce a new methodology for Bayesian variable selection in linear regression 
that is independent of the traditional indicator method. A diagonal matrix $\mathbf{G}$ is introduced 
to the prior of the coefficient vector $\bbeta$, with each of the $g_j$'s, bounded between 
$0$ and $1$, on the diagonal serves as a stabilizer of the corresponding $\beta_j$. Mathematically, a 
promising variable has a $g_j$ value that is close to $0$, whereas the value of $g_j$ corresponding to 
an unpromising variable is close to $1$. This property is proven in this paper under orthogonality together 
with other asymptotic properties. Computationally, the sample path of each $g_j$ is obtained through 
Metropolis-within-Gibbs sampling method. Also, in this paper we give two simulations to verify the capability 
of this methodology in variable selection.

Keywords: multiple linear regression; Bayesian variable selection; $g$-prior

\doublespacing

\section{Introduction}\label{sec1}

Consider the traditional multiple linear regression (MLR) model having the form
\begin{equation}
\mathbf{y} = \mathbf{X}\bbeta+\bepsilon,~\bepsilon\sim\mathcal{N}(\mathbf{0},\sigma^2\mathbf{I}),
\label{eqn:mlr}
\end{equation}
where $\mathbf{y}$ is an $n\times 1$ response vector, $\mathbf{X}$ an $n\times p$ 
data matrix, $\bbeta$ a $p\times 1$ coefficient vector and $\bepsilon$ 
the random error. In quite a few areas where the linear model applies, an intreresting 
yet very important fact is that only a small portion of variables affect the response 
whereas others are trivial~\citep{jeffreys:1991}. A great many authors have 
discussed this topic from both the frequentist (for example,~\cite{ullah:2013}) 
and the Bayesian perspective~\citep{walli:2011}. In this paper, we proceed 
following the Bayesian path.

In the Bayesian setting (see, for example, \cite{miller:2002} for detail), the coefficient 
$\bbeta$ is usually given a conventional $g$-prior $\mathcal{N}(\mathbf{0},g\sigma^2(\mathbf{X}^T\mathbf{X})^{-1})$, 
introduced in~\cite{zellner:1986}. The g-prior has been given much attention in Bayesian 
variable selection primarily because it leads to a computationally tractable Bayes 
Factor. By introducing an indicator vector, variables are selected and different 
subsets of variables are compared to each other, or to a reference, based on the 
value of Bayes Factor. Multiple works have been done to review this methodology. 
For a recent one, see~\cite{dey:2015}.

In detail, a random indicator vector $\bgamma=(\gamma_1,\gamma_2,\cdots,\gamma_p)^T$ 
is injected to Equation~\eqref{eqn:mlr}, such that for each $\gamma_j, j=1,2,\ldots,p$, 
we have
\begin{equation}
\gamma_j=\begin{cases} 1 & \text{if }\mathbf{x}_j\text{ appears in the model},\\ 0 & \text{otherwise}.\end{cases}
\end{equation}
Thus, for each combination of $\gamma_j$'s, Equation~\eqref{eqn:mlr} is modified to 
\begin{equation}
\nonumber \mathbf{y} = \mathbf{X}_{\bgamma}\bbeta_{\bgamma}+\bepsilon,~\bepsilon\sim\mathcal{N}(\mathbf{0},\sigma^2\mathbf{I}),
\end{equation}
where $\mathbf{X}_{\bgamma}$ is the subset of variables according to 
$\bgamma$ and $\bbeta_{\bgamma}$ is the corresponding coefficient vector. There is a total of $2^p$ 
combinations of $\bgamma$, including the full model, $\bgamma=\mathbf{1}=(1,1,\ldots,1)^T$, and 
the null model, $\bgamma=\mathbf{0}=(0,0,\ldots,0)^T$. For each combination of $\gamma_j$'s, 
a corresponding density $p(\mathbf{y}\mid\bgamma)$ and the Bayes Factor
\begin{equation}
\nonumber \mathbf{BF}_{\bgamma\:\mathbf{1}}=\frac{p(\mathbf{y}\mid\bgamma)}{p(\mathbf{y}\mid\mathbf{1})}.
\end{equation}

A difficulty quickly arises when the dimensionality increases, due to the fact that this method searches 
through the model space of size $2^p$. Certain works have been done to solve this problem. \cite{george:1993}
proposed an empirical method of stochastic search variable selection (SSVS). Each $\beta_j$ is selected 
or rejected based on a Monte Carlo average of $\gamma_j$, coming from a Gibbs-sampler. Such Monte Carlo 
average of $\gamma_j$ is called the posterior inclusion probability (PIP) of $\beta_j$. Similar work 
can be seen in~\cite{barbieri:2004}, in which the authors proposed a median probability model 
rather than a highest probability model, and the variables are selected based on a criterion of $PIP_j>0.5$. 
Further,~\cite{fokoue:2007} modified the method in~\cite{barbieri:2004} to a prevalence model, which 
solved the problem that such median probability model may not exist. Certain works have been doen to 
summarize the Bayesian variable selection with the indicator method.~\cite{ohara:2011} provides 
a thorough review of different methods in Bayesian variable selection.~\cite{han:2001} gives a 
comparison in detail of different empirical Bayes methods, especially the Markov Chain Monte Carlo (MCMC) 
methods, regarding the Bayes Factor.

Certain thoughts have been given to the prior of $\bbeta$ instead of the traditional $g$-prior. 
\cite{george:1997} provides a prior of $\bbeta_{\bgamma}$ that follows 
\begin{equation}
\bbeta_{\bgamma}\sim\mathcal{N}\left(\mathbf{0},\mathbf{D}_{\bgamma}\mathbf{R}_{\bgamma}\mathbf{D}_{\bgamma}\right),
\label{eqn:gandm}
\end{equation}
where $\mathbf{D}_{\bgamma}$ is a diagonal matrix and $\mathbf{R}_{\bgamma}$ is symmetric. Such prior 
gives a good generalization of $g$-prior. \cite{agliari:1988} gives an alternative that follows
\begin{equation}
\bbeta_{\bgamma}\sim\mathcal{N}\left(\mathbf{0},g\sigma^2\left(\mathbf{X}_{\bgamma}^T\mathbf{A}_{\bgamma}\mathbf{X}_{\bgamma}\right)^{-1}\right),
\end{equation}
where $\mathbf{A}_{\bgamma}$ is symmetric and weights different observations, but not the features. Also, 
see~\cite{ley:2001} for a very detailed comparison of different prior choices for Bayesian 
variable selection. Moreover, multiple works have been done to extend the original Zellner's $g$-prior. 
Specifically,~\cite{liang:2008} proposed a study on mixtures of $g$-priors which provides a family of 
hyperpriors on $g$ while still preserves the tractability on the marginal likelihood. \cite{bove:2011} 
developed an extension of the classical Zellner's $g$-prior to generalized linear models, given a large 
family of hyperpriors on $g$.~\cite{maruyama:2011} introduced a fully Bayes formulation with an 
orthogonal decomposition on the matrix $\mathbf{X}_{\bgamma}^T\mathbf{X}_{\bgamma}$, which resolves the 
issue of $p>n$. All the works mentioned above rely on the indicator method, which is classic but somewhat 
redundant. To its worst, the methods still have to face the model space of size $2^p$. In this work, we 
intend to get rid of this indicator method completely.

On the other hand, \cite{tipping:2001} introduced a method called the relevance vector machine (RVM) from the 
machine learning perspective that performs nonparametric variable selection. Retaining the traditional 
Gaussian prior on $\bbeta$, with a little modification, each of the $\beta_j$'s follows a Gaussian prior 
$\mathcal(0,\alpha_j^{-1})$ independently. The parameter $\alpha_j$ serves a purpose as the stabilizer.
That is, since the coefficient $\beta_j$ is \textit{a priori} centered at $0$, the prior variance become 
$0$ as $\alpha_j\rightarrow\infty$, and, on the contrary, the prior of $\beta_j$ becomes flat as $\alpha_j\rightarrow 0$. 
Interestingly, as stated in~\cite{tipping:2004}, combining the non-sparse Gaussian prior on $\bbeta$ with a 
Gamma hyperprior on each of the $\alpha_j$'s, the marginal of $\bbeta$ in fact becomes a multivariate 
t-distribution after integrating out the $\alpha_j$'s, which leads the RVM to a sparse selection machine. 
This property of sparsity is even more elegant when the input in the linear model is raised from feature 
space to kernel space, which is the main focus in~\cite{tipping:2001,tipping:2004}, but not in our work. 

Our work somewhat combines the methodology in~\cite{george:1997} and~\cite{tipping:2004}, but gets 
rid of the traditional indicator method completely. Section~\ref{sec2} provides a thorough theoretical analysis 
on this new method, including the formulation, some important derivation, and some asymptotic properties. 
We introduce the computation of model fitting in~\ref{sec3}. Here we apply the method of Metropolis-within-Gibbs. 
In Section 4, we verify the ability of variable selection of this new methodology with two examples. 
Finally, we provide a summary in Section 5.

\section{The $\kappa$-$\mathbf{G}$ Formulation}\label{sec2}
\subsection{The hierarchical model for variable selection}

Given an MLR model with form~\eqref{eqn:mlr}, we inject a prior to the coefficient $\bbeta$ having the 
form $\mathcal{N}_p(\bbeta\mid\mathbf{0},\kappa\sigma^2(\mathbf{G}\mathbf{X}^T\mathbf{X}\mathbf{G})^{-1})$, 
where, in the variance of the prior, $\kappa>0$ controls the total scale of the variance, and 
$\mathbf{G}=diag(g_1,g_2,\ldots,g_p)$ controls how ``relevant'' each dimension is, with each $g_j\in(0,1)$ 
having an impact to the variance of the corresponding $\beta_j$. This is to some extent a combination 
between~\cite{george:1997} and~\cite{tipping:2001}. In comparison to~\cite{george:1997}, 
the diagonal matrix $\mathbf{D}$ in~\eqref{eqn:gandm} is the matrix $\mathbf{G}^{-1}$ here, and $\mathbf{R}$ 
is the matrix $(\mathbf{X}^T\mathbf{X})^{-1}$. The essential difference is that we have discarded the indicator 
$\bgamma$. Also, in comparison to~\cite{tipping:2001}, this prior can be seen as a parametric analogy to the prior 
given in RVM.

Further, each of the $g_j$'s is assigned an i.i.d. $Beta(a,b)$ prior, and by conjugacy $\kappa$ an 
inverse-gamma prior $IG(\alpha,\theta)$. We keep the setting in~\cite{zellner:1986} for $\sigma^2$, that is, 
a Jeffreys' prior $p(\sigma^2)\propto(\sigma^2)^{-1}$. And thus, the formulation of the hierarchical 
model follows:
\begin{equation}
\begin{split}
\mathbf{y}\left\vert\bbeta,\sigma^2\right. &\sim\mathcal{N}_n\left(\mathbf{y}\left\vert\mathbf{X}\bbeta,\sigma^2\mathbf{I}\right.\right)\\
\bbeta\left\vert\mathbf{G},\kappa,\sigma^2\right. &\sim\mathcal{N}_p\left(\bbeta\left\vert\mathbf{0},\kappa\sigma^2(\gxxg)^{-1}\right.\right)\\
\mathbf{G} &\sim\prod_{j=1}^p Beta\left(g_j\left\vert a,b\right.\right)\\
\kappa &\sim IG(\alpha,\theta)\\
p(\sigma^2) &\propto(\sigma^2)^{-1}
\end{split}
\label{eqn:hier}
\end{equation}
Directly following~\eqref{eqn:hier}, the joint posterior is given by
\begin{equation}
\begin{split}
p\left(\bbeta,\kappa,\mathbf{G},\sigma^2\left\vert\mathbf{y}\right.\right) &\sim p(\mathbf{y}\left\vert\bbeta,\sigma^2\right.)p(\bbeta\left\vert\mathbf{G},\kappa,\sigma^2\right.)p(\kappa)p(\mathbf{G})p(\sigma^2)\\
 &\sim\mathcal{N}_n\left(\mathbf{y}\left\vert\mathbf{X}\bbeta,\sigma^2\mathbf{I}\right.\right)\times\mathcal{N}_p\left(\bbeta\left\vert\mathbf{0},\kappa\sigma^2(\gxxg)^{-1}\right.\right)\\
 &~~~~\times IG(\alpha,\theta)\times\prod_{j=1}^p Beta\left(g_j\left\vert a,b\right.\right)\times(\sigma^2)^{-1}\\
 &\sim \left\vert\sigma^2\mathbf{I}\right\vert^{-1/2}\exp\left\{-\frac{1}{2\sigma^2}\left(\mathbf{y}-\mathbf{X}\bbeta\right)^T\left(\mathbf{y}-\mathbf{X}\bbeta\right)\right\}\times\\
 &~~~~\left\vert\kappa\sigma^2\left(\gxxg\right)^{-1}\right\vert^{-1/2}\exp\left\{-\frac{1}{2\kappa\sigma^2}\bbeta^T\gxxg\bbeta\right\}\\
 &~~~~\times\kappa^{-\alpha-1}\exp\left(-\frac{\theta}{\kappa}\right)\times\left(\prod_{j=1}^p g_j^{a-1}(1-g_j)^{b-1}\right)\times\left(\sigma^2\right)^{-1}.
\end{split}
\label{eqn:joint}
\end{equation}
From~\eqref{eqn:joint}, it is of specific interest to examine the posterior of $\bbeta$ and $\mathbf{G}$. 
The former gives some intuition of the connection between this formulation and both the ordinary least 
square (OLS) estimation and the original Zellner's $g$-prior, whereas the latter is crucial in the understanding 
of variable selection with this model.

\subsection{Posterior of $\bbeta$}

Following~\eqref{eqn:joint}, the posterior of $\bbeta$ is given by
\begin{equation}
\begin{split}
p\left(\bbeta\left\vert\kappa,\mathbf{G},\sigma^2,\mathbf{y}\right\vert\right) &\sim\exp\left\{-\frac{1}{2\sigma^2}\left(\mathbf{y}-\mathbf{X}\bbeta\right)^T\left(\mathbf{y}-\mathbf{X}\bbeta\right)\right\}\times\\
 &~~~~\exp\left\{-\frac{1}{2\kappa\sigma^2}\bbeta^T\gxxg\bbeta\right\}\\
 &\sim\mathcal{N}_p\left(\tilde{\mu}_{\bbeta},\tilde{\Sigma}_{\bbeta}\right),
\end{split}
\label{eqn:postbeta}
\end{equation}
where $\tilde{\mu}_{\bbeta}$ and $\tilde{\Sigma}_{\bbeta}$ are the posterior mean and variance and take 
one the form of 
\begin{equation}
\begin{split}
\tilde{\mu}_{\bbeta} &= \left(\mathbf{X}^T\mathbf{X}+\frac{1}{\kappa}\gxxg\right)^{-1}\mathbf{X}^T\mathbf{y}\\
\tilde{\Sigma}_{\bbeta} &= \sigma^2\left(\mathbf{X}^T\mathbf{X}+\frac{1}{\kappa}\gxxg\right)^{-1}.
\end{split}
\label{eqn:postbetaparam}
\end{equation}
From~\eqref{eqn:postbeta} and~\eqref{eqn:postbetaparam}, we have the following asymptotic results.
\begin{lemma} \label{lemma2.1}
Denote by $\widehat{\bbeta}^{(OLS)}$ the OLS estimator of $\bbeta$. For any $\kappa\neq 0$, as $\mathbf{G}\rightarrow\mathbf{0}$, 
$\tilde{\mu}_{\bbeta}\rightarrow\widehat{\bbeta}^{(OLS)}$ and 
$\tilde{\Sigma}_{\bbeta}\rightarrow Var\left(\widehat{\bbeta}^{(OLS)}\right)$.
\end{lemma}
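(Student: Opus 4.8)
The plan is to treat the whole statement as a matrix-continuity argument: both $\tilde{\mu}_{\bbeta}$ and $\tilde{\Sigma}_{\bbeta}$ are continuous functions of the single matrix $\mathbf{M}(\mathbf{G}) := \xtx + \frac{1}{\kappa}\gxxg$, so it suffices to show that $\mathbf{M}(\mathbf{G}) \to \xtx$ as $\mathbf{G} \to \mathbf{0}$ and that inversion is continuous at $\xtx$. Throughout I take as standing hypothesis that $\xtx$ is nonsingular, which is exactly the condition under which the OLS estimator $\widehat{\bbeta}^{(OLS)} = (\xtx)^{-1}\mathbf{X}^T\mathbf{y}$ and its variance $\sigma^2(\xtx)^{-1}$ are defined in the first place.

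First I would show the perturbation term vanishes. Writing $\mathbf{G} = diag(g_1,\ldots,g_p)$, the $(i,k)$ entry of $\gxxg$ equals $g_i\,(\xtx)_{ik}\,g_k$, a monomial in the $g_j$'s of total degree two with no constant term; hence every entry tends to $0$ as $\mathbf{G} \to \mathbf{0}$ (equivalently as each $g_j \to 0^+$), so $\frac{1}{\kappa}\gxxg \to \mathbf{0}$ entrywise for any fixed $\kappa \neq 0$, and therefore $\mathbf{M}(\mathbf{G}) \to \xtx$. Because we are in a finite-dimensional space, entrywise convergence is convergence in any matrix norm, so this limit is unambiguous.

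Next I would invoke continuity of the inverse map on the set of invertible matrices. Since $\xtx$ is nonsingular and the set of invertible matrices is open, $\mathbf{M}(\mathbf{G})$ is invertible for $\mathbf{G}$ in a neighbourhood of $\mathbf{0}$, and $\mathbf{M}(\mathbf{G})^{-1} \to (\xtx)^{-1}$; one can make this quantitative through the identity $\mathbf{A}^{-1} - \mathbf{B}^{-1} = \mathbf{A}^{-1}(\mathbf{B}-\mathbf{A})\mathbf{B}^{-1}$ if an explicit rate is wanted, but pointwise continuity is all the statement needs. Multiplying on the right by the fixed vector $\mathbf{X}^T\mathbf{y}$ and scaling by the fixed constant $\sigma^2$ are themselves continuous operations, so $\tilde{\mu}_{\bbeta} = \invmat \mathbf{X}^T\mathbf{y} \to (\xtx)^{-1}\mathbf{X}^T\mathbf{y} = \widehat{\bbeta}^{(OLS)}$ and $\tilde{\Sigma}_{\bbeta} = \sigma^2\invmat \to \sigma^2(\xtx)^{-1} = Var\left(\widehat{\bbeta}^{(OLS)}\right)$, which is the claim.

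There is essentially no hard analytic obstacle here; the content is bookkeeping plus the single nontrivial fact that matrix inversion is continuous on the invertible matrices. The only point that genuinely needs care is the standing invertibility of $\xtx$: without it neither side of the claimed limits is well defined, and the map $\mathbf{M} \mapsto \mathbf{M}^{-1}$ fails to be continuous where $\mathbf{M}$ is singular. I would therefore state that assumption explicitly (it is already implicit in the very definition of the $g$-prior, whose covariance involves $(\xtx)^{-1}$) and otherwise regard the result as an immediate consequence of continuity.
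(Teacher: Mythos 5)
Your proposal is correct and follows essentially the same route as the paper, which simply takes the limit $g_j \to 0$ inside $\left(\xtx+\frac{1}{\kappa}\gxxg\right)^{-1}$ and reads off $\tilde{\mu}_{\bbeta}\to(\xtx)^{-1}\mathbf{X}^T\mathbf{y}$ and $\tilde{\Sigma}_{\bbeta}\to\sigma^2(\xtx)^{-1}$. You merely make explicit the continuity-of-inversion justification and the standing nonsingularity of $\xtx$ that the paper's one-line proof leaves implicit, which is a welcome but not substantively different refinement.
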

\begin{proof}
The proof is rather straightforward. Given $\kappa\neq 0$ and $g_j\rightarrow 0$, $\forall j$, 
\begin{equation}
\nonumber \tilde{\mu}_{\bbeta}\rightarrow\left(\mathbf{X}^T\mathbf{X}\right)^{-1}\mathbf{X}^T\mathbf{y}=\widehat{\bbeta}^{(OLS)}
\end{equation}
and 
\begin{equation}
\nonumber \tilde{\Sigma}_{\bbeta}\rightarrow\sigma^2\left(\mathbf{X}^T\mathbf{X}\right)^{-1}=Var\left(\widehat{\bbeta}^{(OLS)}\right).
\end{equation}\qedhere
\end{proof}
\begin{lemma} \label{lemma2.2}
For any $\kappa\neq 0$, as $\mathbf{G}\rightarrow\mathbf{I}$, we have
\begin{equation}
\nonumber \tilde{\mu}_{\bbeta}\rightarrow\frac{\kappa}{\kappa+1}\widehat{\bbeta}^{(OLS)},
\end{equation}
which is the same as the posterior mean of $\bbeta$ in Zellner's g-prior.
\end{lemma}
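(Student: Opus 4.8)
The plan is to compute the limit directly, mirroring the structure of the proof of Lemma~\ref{lemma2.1}. Starting from the posterior mean
\[
\tilde{\mu}_{\bbeta} = \invmat\mathbf{X}^T\mathbf{y},
\]
I would first observe that the only place $\mathbf{G}$ enters is through the term $\gxxg = \mathbf{G}\mathbf{X}^T\mathbf{X}\mathbf{G}$. Since $\mathbf{G}=diag(g_1,\ldots,g_p)$ and each $g_j\to 1$, continuity of matrix multiplication gives $\gxxg\to\mathbf{I}\,\xtx\,\mathbf{I}=\xtx$. This is the key observation that makes the whole argument collapse.

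Next I would substitute this limit into the matrix being inverted. The sum becomes
\[
\xtx+\frac{1}{\kappa}\xtx=\left(1+\frac{1}{\kappa}\right)\xtx=\frac{\kappa+1}{\kappa}\,\xtx,
\]
which is simply a positive scalar multiple of $\xtx$. The essential structural fact here is that factoring out $\xtx$ turns the inverse into a scalar times $(\xtx)^{-1}$: using $(cA)^{-1}=c^{-1}A^{-1}$ with $c=(\kappa+1)/\kappa$ yields $\frac{\kappa}{\kappa+1}(\xtx)^{-1}$. Multiplying through by $\mathbf{X}^T\mathbf{y}$ then gives $\tilde{\mu}_{\bbeta}\to\frac{\kappa}{\kappa+1}(\xtx)^{-1}\mathbf{X}^T\mathbf{y}=\frac{\kappa}{\kappa+1}\widehat{\bbeta}^{(OLS)}$, which is the claimed form.

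The final step is interpretive rather than computational: I would note that this coincides with Zellner's $g$-prior posterior mean, in which the shrinkage factor is $g/(g+1)$; here $\kappa$ plays exactly the role of $g$, so setting $\mathbf{G}\to\mathbf{I}$ recovers the classical result with $\kappa$ as the global scale parameter. Honestly, there is no real obstacle in this argument; the only points requiring a moment's care are the continuity of the matrix inverse as a function of $\mathbf{G}$ (so that the limit may be passed inside the inversion) and the standing assumption that $\xtx$ is invertible, which is already implicit in the definition of $\widehat{\bbeta}^{(OLS)}$.
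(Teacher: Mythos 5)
Your proposal is correct and follows essentially the same route as the paper's proof: substituting $\mathbf{G}\rightarrow\mathbf{I}$ so that $\gxxg\rightarrow\xtx$, factoring the scalar $\frac{\kappa+1}{\kappa}$ out of the inverse, and identifying the result with $\frac{\kappa}{\kappa+1}\widehat{\bbeta}^{(OLS)}$. Your added remarks on continuity of matrix inversion and invertibility of $\xtx$ only make explicit what the paper leaves implicit.
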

\begin{proof}
Given $\kappa\neq 0$ and $g_j\rightarrow 1$, $\forall j$,
\begin{equation}
\begin{split}
\nonumber \tilde{\mu}_{\bbeta} &\rightarrow\left(\mathbf{X}^T\mathbf{X}+\frac{1}{\kappa}\mathbf{X}^T\mathbf{X}\right)^{-1}\mathbf{X}^T\mathbf{y}\\
 &=\frac{\kappa}{\kappa+1}\left(\mathbf{X}^T\mathbf{X}\right)^{-1}\mathbf{X}^T\mathbf{y}\\
 &=\frac{\kappa}{\kappa+1}\widehat{\bbeta}^{(OLS)}
\end{split}
\end{equation}\qedhere
\end{proof}

Lemma~\ref{lemma2.1} states that given $\mathbf{G}$ approaches a null matrix, the posterior mean of $\bbeta$ 
approaches the OLS estimator of $\bbeta$. Also notice that $\mathbf{G}\rightarrow\mathbf{0}$ is equivalent 
to assigning a flat prior to $\bbeta$, since the prior would have infinite variance. Thus it would lead 
to a posterior that is equivalent to OLS. Lemma~\ref{lemma2.2} states that in the case where $\mathbf{G}$ 
approaches an identity matrix, the posterior mean of $\bbeta$ converges to the case in the original Zellner's 
$g$-prior, with the parameter $\kappa$ in this formulation being the same as the original parameter $g$. 
This result gives an intuition that the $\kappa$-$\mathbf{G}$ formulation is indeed a generalization of 
Zellner's $g$-prior. Also, it is of interest that as $\kappa\rightarrow\infty$ the convergence from $\tilde{\mu}_{\bbeta}$ 
to $\widehat{\bbeta}^{(OLS)}$ does not require a specific matrix $\mathbf{G}$.

\subsection{Posterior of $\mathbf{G}$}

We then derive the posterior of $\mathbf{G}$ given $\mathbf{y}$, $\kappa$ and $\sigma^2$ by integrating 
out $\bbeta$.
\begin{equation}
\begin{split}
p\left(\mathbf{G}\left\vert\mathbf{y},\sigma^2,\kappa\right.\right) &=\int_{\bbeta}p\left(\mathbf{y}\left\vert\bbeta,\sigma^2\right.\right)p\left(\bbeta\left\vert\mathbf{G}\right.\right)p\left(\mathbf{G}\right)\mathrm{d}\bbeta\\
 &=\prod_{j=1}^p Beta\left(g_j\left\vert a,b\right.\right)\int_{\bbeta}\mathcal{N}_n\left(\mathbf{y}\left\vert\mathbf{X}\bbeta,\sigma^2\mathbf{I}\right.\right)\times\mathcal{N}_p\left(\bbeta\left\vert\mathbf{0},\kappa\sigma^2(\gxxg)^{-1}\right.\right)\mathrm{d}\bbeta\\
 &\propto \left\vert\mathbf{G}\right\vert^a\left\vert\mathbf{I}_p-\mathbf{G}\right\vert^{b-1}\left\vert\xtx+\frac{1}{\kappa}\gxxg\right\vert^{-1/2}\times\\
 &~~~~\exp\left\{\frac{1}{2}\mathbf{y}^T\mathbf{X}\left(\xtx+\frac{1}{\kappa}\gxxg\right)^{-1}\mathbf{X}^T\mathbf{y}\right\}
\end{split}
\label{eqn:postg}
\end{equation}

Unfortunately, the expression in~\eqref{eqn:postg} does not have a closed form. However, we could see 
that the posterior properties of $\mathbf{G}$ relies much on the matrix $\invmat$. And yet we cannot 
proceed the analysis of posterior properties of $\mathbf{G}$ in the most general cases since this inverse 
matrix does not have a further expression in which the matrix $\mathbf{G}$ can be isolated. Figure~\ref{fig:contour} 
gives an intuition of the posterior of $\mathbf{G}$ in the case where $p=2$. Without loss of generality, 
we assume $\mathbf{x}_1$ is a promising variable while $\mathbf{x}_2$ is not. In such case, we have 
$\mathbf{x}_2^T\mathbf{y}=0$ and $\left\vert\mathbf{x}_1^T\mathbf{y}\right\vert\gg 0$. Notice from the 
figure that the posterior of $\mathbf{G}$ is maximized roughly at $g_1\rightarrow 0$ and $g_2\rightarrow 1$. 
This is crucial in linking the $\kappa$-$\mathbf{G}$ formulation and variable selection. Intuitively, 
we would expect a promising variable to have a corresponding $g_j$ close to $0$ while an unpromising variable 
to have a $g_j$ close to $1$.
\begin{figure}[!htbp]
\centering
\includegraphics[width=1.0\textwidth]{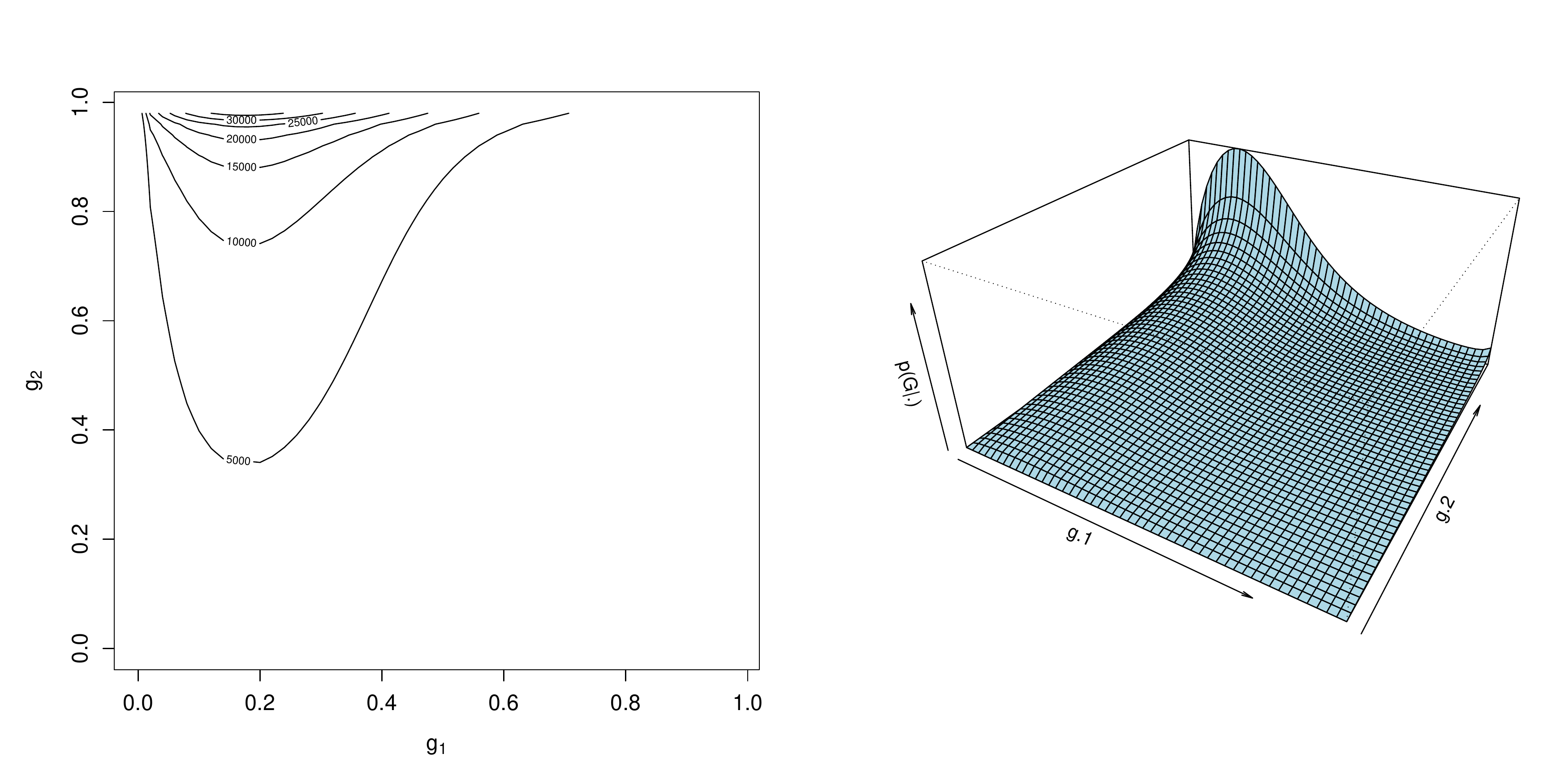}
\caption{Perspective and contour plot of $p\left(\mathbf{G}\left\vert\mathbf{y},\sigma^2,\kappa\right.\right) $}
\label{fig:contour}
\end{figure}

\subsection{A case of orthogonality}
As stated above, much of the posterior properties rely on $\invmat$. Though at this point we are not 
able to proceed to the analysis of the most general case, the analysis under orthogonality where $\xtx=diag(\xjtxj)$ 
is rather tangible. In this case, the posterior of $\mathbf{G}$ in~\eqref{eqn:postg} is simplified to 
\begin{equation}
\begin{split}
p\left(\mathbf{G}\left\vert\mathbf{y},\sigma^2,\kappa\right.\right) &\propto \prod_{j=1}^p \left(g_j^a\left(1-g_j\right)^{b-1}\left(\kappa+g_j^2\right)^{-1/2}\times\right.\\
 &~~~~\left.\exp\left\{\frac{\kappa\left(\xjty\right)^2}{2\sigma^2\xjtxj\left(\kappa+g_j^2\right)}\right\}\right).
\end{split}
\label{eqn:orthog}
\end{equation}
Based on~\eqref{eqn:orthog}, the joint posterior density of $\mathbf{G}$ can be written as the product 
of the marginal posterior density functions of each $g_j$'s, which implies that the $g_j$'s are \textit{a posteriori} 
independent under orthogonality. This simplifies the analysis of $p(\mathbf{G}\vert\cdot)$ by analysing 
each individual posterior density $p(g_j\vert\cdot)$ with 
\begin{equation}
p\left(g_j\left\vert\cdot\right.\right) \propto g_j^a\left(1-g_j\right)^{b-1}\left(\kappa+g_j^2\right)^{-1/2}\exp\left\{\frac{\kappa\left(\xjty\right)^2}{2\sigma^2\xjtxj\left(\kappa+g_j^2\right)}\right\}.
\label{eqn:postgj}
\end{equation}

As was mentioned in the introduction, a crucial question with this formulation is:``how is the $\kappa-\mathbf{G}$ 
methodology linked together with variable selection?'' Such question can be seen in two ways. First, we answer 
how the promising variables lead to certain posterior properties of $g_j$'s. And second, we answer why such 
properties of $g_j$'s indicate certain variables are promising and others are not.
\begin{thm}
A promising variable $\mathbf{x}_j$ has a corresponding $g_j$ that is close to $0$, whereas an unpromising 
variable has a corresponding $g_j$ that is close to $1$.
\end{thm}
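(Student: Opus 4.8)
The plan is to exploit the \emph{a posteriori} independence established in \eqref{eqn:orthog} and reduce the whole statement to a one-dimensional analysis of the marginal density $p(g_j\mid\cdot)$ in \eqref{eqn:postgj}. The entire dependence on the data enters through the single scalar $s_j:=\kappa(\xjty)^2/(2\sigma^2\,\xjtxj)$ sitting in the exponent, so ``promising'' and ``unpromising'' should be translated into ``$s_j$ large'' and ``$s_j\approx0$'' respectively. I would work with the log-density
\begin{equation}
\nonumber \ell(g_j)=a\log g_j+(b-1)\log(1-g_j)-\tfrac12\log(\kappa+g_j^2)+\frac{s_j}{\kappa+g_j^2},
\end{equation}
and track its maximizer on $(0,1)$ through the stationarity condition $\ell'(g_j)=0$, i.e.
\begin{equation}
\nonumber \frac{a}{g_j}-\frac{b-1}{1-g_j}-\frac{g_j}{\kappa+g_j^2}-\frac{2s_j\,g_j}{(\kappa+g_j^2)^2}=0.
\end{equation}
The theorem then becomes a claim about how the root of this equation migrates as $s_j$ varies, and the proof splits into the two advertised directions.

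For the promising direction I would let $s_j\to\infty$, which is the right asymptotic reading of $|\xjty|\gg0$: under orthogonality $\xjty=\xjtxj\,\beta_j+\mathbf{x}_j^T\bepsilon$, so $s_j$ grows without bound whenever $\beta_j\neq0$. The last term in $\ell'$ is strictly negative and, at any fixed $g_j\in(0,1)$, grows linearly in $s_j$, so it eventually overwhelms the three bounded prior/determinant terms; the only way to keep $\ell'(g_j)=0$ is then to let $g_j$ shrink so that $a/g_j$ can grow large enough to balance it. Matching $a/g_j\sim 2s_j g_j/\kappa^2$ near the origin gives a maximizer $g_j^\star\sim \kappa\sqrt{a/(2s_j)}\to0$, which I would make rigorous by exhibiting a threshold $\varepsilon(s_j)\to0$ with $\ell'>0$ on $(0,\varepsilon(s_j))$ and $\ell'<0$ for $g_j$ bounded away from the origin, pinning the global mode into a shrinking neighbourhood of $0$. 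This is the first half of the theorem.

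For the unpromising direction I would set $\xjty=0$, so $s_j=0$, the exponential factor is constant, and the density collapses to $g_j^{a}(1-g_j)^{b-1}(\kappa+g_j^2)^{-1/2}$. The stationarity condition loses its dominating term and reduces to $a/g_j-(b-1)/(1-g_j)-g_j/(\kappa+g_j^2)$. Here I would show that, in the admissible hyperparameter regime (e.g.\ $a\ge1$ and $b\le1$), this expression is strictly positive on all of $(0,1)$: the inequality $a/g_j>g_j/(\kappa+g_j^2)$ is equivalent to $a\kappa>(1-a)g_j^2$, which holds for $a\ge1$, while $b\le1$ makes $-(b-1)/(1-g_j)$ nonnegative so that the factor $(1-g_j)^{b-1}$ no longer penalises $g_j\to1$. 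Hence $\ell$ is monotone increasing and its mode is driven to the right endpoint $g_j=1$, giving the second half. Combining the two coordinatewise monotonicity analyses reproduces the joint picture of Figure~\ref{fig:contour}.

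The main obstacle I anticipate is the unpromising case, for two reasons. First, ``$g_j$ close to $1$'' genuinely depends on the boundary behaviour of $(1-g_j)^{b-1}$ at $g_j=1$: if $b>1$ this factor vanishes there and the conclusion fails, so the theorem is really conditioned on the admissible $(a,b)$ range and I would have to state that range explicitly rather than treat it as automatic. Second, a genuinely unpromising variable has $\xjty=\mathbf{x}_j^T\bepsilon\neq0$, so $s_j=O_p(1)$ rather than exactly $0$; I would therefore need to argue that a bounded $s_j$ perturbs the monotone derivative only by a bounded amount and cannot manufacture an interior maximizer bounded away from $1$, which is the one place a uniform estimate is required instead of a clean plug-in limit. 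The promising direction is comparatively safe, since the divergence of the exponent makes the argument robust to the precise values of $a$, $b$, and $\kappa$.
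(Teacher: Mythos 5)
Your overall route coincides with the paper's: under orthogonality, reduce the theorem to the univariate posterior \eqref{eqn:postgj} and track its maximizer in the two regimes (exponent negligible versus exponent dominant). The paper simply fixes $a=b=\tfrac12$, $\kappa=\sigma^2=1$ and argues by monotonicity for the unpromising case and by a dominance heuristic for the promising case; you keep general $(a,b,\kappa)$, write the stationarity equation, and extract the rate $g_j^\star\sim\kappa\sqrt{a/(2s_j)}$, which is genuinely sharper where it works. Your unpromising half is essentially fine (though your sufficient condition $a\ge1$ is stronger than needed: $a\kappa\ge 1-a$ already gives $a/g_j>g_j/(\kappa+g_j^2)$ on $(0,1)$, which covers the paper's actual choice $a=\tfrac12$, $\kappa=1$), your insistence that $b\le 1$ is essential there is correct, and your point that a noise variable has $s_j=O_p(1)$ rather than $s_j=0$, requiring a uniform perturbation bound, is a real improvement over the paper, which assumes $\cos\theta_j=0$ exactly.

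The genuine gap is in the promising half, and it bites precisely in the hyperparameter regime the paper uses. For $b<1$ (the paper takes $b=0.5$ and $b=0.3$ to get the U-shaped prior), the derivative $\ell'(g_j)$ contains the term $(1-b)/(1-g_j)\to+\infty$ as $g_j\to 1^-$, while the $s_j$-term is bounded in magnitude by $2s_j/\kappa^2$ on any left neighbourhood of $1$. Hence for every \emph{finite} $s_j$ one has $\ell'>0$ near $g_j=1$, the factor $(1-g_j)^{b-1}$ makes $p(g_j\mid\cdot)$ unbounded at the right endpoint, and the global supremum of the density is attained as $g_j\to1^-$ for promising and unpromising variables alike. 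Your plan to ``exhibit $\ell'<0$ for $g_j$ bounded away from the origin, pinning the global mode into a shrinking neighbourhood of $0$'' therefore cannot be executed as stated: at best you isolate a \emph{local} interior mode near $0$. To repair it you must either take $b\ge 1$ (which, by your own observation, kills the unpromising half unless $b=1$ exactly), or recast the promising claim as concentration of mass rather than of the mode: the singularity at $1$ is integrable and its contribution stays of constant order in $s_j$, whereas a window $(0,\varepsilon)$ carries mass inflated by $e^{s_j/\kappa}$ against at most $e^{s_j/(\kappa+\varepsilon^2)}$ elsewhere, so $P(g_j<\varepsilon\mid\cdot)\to1$ as $s_j\to\infty$. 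Note the paper's own proof glosses the identical point --- it asserts the exponential ``dominates the whole posterior likelihood'' while silently ignoring the $(1-g_j)^{-1/2}$ spike --- so your instinct that the unpromising side is the delicate one is exactly backwards: with U-shaped priors it is the promising side whose MAP statement needs surgery.
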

\begin{proof}
Without loss of generality, assume $a=b=\frac{1}{2}$ and $\kappa=\sigma^2=1$. Given the posterior 
density $p\left(g_j\vert\mathbf{y},\sigma^2,\kappa\right)$
such that
\begin{equation}
\begin{split}
p\left(g_j\vert\mathbf{y},\sigma^2,\kappa\right) &\propto g_j^{1/2}\left(1-g_j\right)^{-1/2}\left(1+g_j^2\right)^{-1/2}\exp\left\{\frac{\left(\xjty\right)^2}{2\xjtxj\left(1+g_j^2\right)}\right\}\\
 &= g_j^{1/2}\left(1-g_j\right)^{-1/2}\left(1+g_j^2\right)^{-1/2}\exp\left\{\frac{\Vert\mathbf{y}\Vert^2\cos^2\theta_j}{2(1+g_j^2)}\right\},
\end{split}
\label{eqn:thm}
\end{equation}
where $\theta_j$ is the angle between $\mathbf{x}_j$ and $\mathbf{y}$, the general idea of the proof is that 
we find the $g_j$ that maximizes the posterior likelihood, i.e. the \textit{maximum a posteriori} estimate 
for the two cases where $\mathbf{x}_j^T\mathbf{y}=0$ and $\mathbf{x}_j^T\mathbf{y}\neq 0$.

\textit{Unpromising variable.} For an unpromising variable $\mathbf{x}_j$, it is reasonable to assume that 
$\cos\theta_j=0$. Therefore in~\eqref{eqn:thm} $\exp(\cdot)=1$ and we are left with
\begin{equation}
\nonumber p\left(g_j\vert\mathbf{y},\sigma^2,\kappa\right) \propto g_j^{1/2}\left(1-g_j\right)^{-1/2}\left(1+g_j^2\right)^{-1/2},
\label{eqn:thm2}
\end{equation}
which is an increasing function of $g_j$ on $(0,1)$, as $\left(1+g_j^2\right)^{-1/2}$ is monotone decreasing 
from $1$ to $\frac{1}{\sqrt{2}}$, and $g_j^{1/2}\left(1-g_j\right)^{-1/2}$ is monotone 
increasing and $g_j^{1/2}\left(1-g_j\right)^{-1/2}\rightarrow +\infty$ as $g_j\rightarrow 1$. 
Therefore in the case where the variable $\mathbf{x}_j$ is unpromising we have 
\begin{equation}
\widehat{g}_j = \arg\max_{g_j}~p\left(g_j\vert\mathbf{y},\sigma^2,\kappa\right) = 1^-.
\end{equation}

\textit{Promising variable.} For a promising variable $\mathbf{x}_j$, it is reasonable to assume that 
$\cos\theta_j\approx 1$. Since all the terms on the exponent in~\eqref{eqn:thm} are positive, $\exp(\cdot)$ 
is a decreasing function of $g_j$ on $(0,1)$. Further, although the value of $\exp(\cdot)$ somewhat depends 
on $\Vert \mathbf{y}\Vert$, the exponential function dominates the whole posterior likelihood with even 
a moderate value of $\Vert \mathbf{y}\Vert$. Therefore we have
\begin{equation}
\begin{split}
\widehat{g}_j &= \arg\max_{g_j}~p\left(g_j\vert\mathbf{y},\sigma^2,\kappa\right)\\
 &\approx \arg\max_{g_j}~\exp\left\{\frac{\Vert\mathbf{y}\Vert^2\cos^2\theta_j}{2(1+g_j^2)}\right\}\\
 &= 0^+.
\end{split}
\end{equation}
And thus concludes the proof of the theorem.
\end{proof}

Further, Corollary 2.1 provides a very useful result under orthogonality.
\begin{coro}
Under orthogonality, the posterior mean of $\bbeta^{(Bayes)}$ under the $\kappa-\mathbf{G}$ formulation, 
$\tilde{\bmu}_{\bbeta}$, is an unbiased estimator of $\bbeta$.
\end{coro}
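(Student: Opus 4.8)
The plan is to exploit the diagonal structure that orthogonality imposes on every matrix appearing in the posterior mean, reducing the vector identity to $p$ independent scalar statements. First I would substitute $\xtx=diag(\xjtxj)$, and hence $\gxxg=diag\!\left(g_j^2\,\xjtxj\right)$, into the expression for $\tilde{\mu}_{\bbeta}$ from~\eqref{eqn:postbetaparam}. The matrix $\xtx+\frac{1}{\kappa}\gxxg$ is then diagonal with $j$-th entry $\xjtxj\left(\kappa+g_j^2\right)/\kappa$, so $\invmat$ is diagonal with $j$-th entry $\kappa\big/\!\left(\xjtxj\left(\kappa+g_j^2\right)\right)$. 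Multiplying by $\mathbf{X}^T\mathbf{y}$, whose $j$-th component is $\xjty$, yields the component-wise form
\begin{equation}
\nonumber \tilde{\mu}_{\beta_j}=\frac{\kappa}{\kappa+g_j^2}\cdot\frac{\xjty}{\xjtxj}=\frac{\kappa}{\kappa+g_j^2}\,\widehat{\beta}_j^{(OLS)},
\end{equation}
that is, a scalar shrinkage of the $j$-th OLS coordinate.

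Next I would take expectations over the data, holding the true $\bbeta$ fixed and using the model~\eqref{eqn:mlr}, so that $E[\xjty]=\xjtxj\,\beta_j$, equivalently the unbiasedness of OLS, $E\!\left[\widehat{\beta}_j^{(OLS)}\right]=\beta_j$. Evaluating the shrinkage factor at the MAP estimate $\widehat{g}_j$ furnished by the theorem then splits the argument into two cases. For a \emph{promising} variable the theorem gives $\widehat{g}_j\to 0^+$, so $\kappa/(\kappa+\widehat{g}_j^2)\to 1$ and $E[\tilde{\mu}_{\beta_j}]\to\beta_j$. For an \emph{unpromising} variable the true coordinate is $\beta_j=0$; since the factor merely multiplies $\widehat{\beta}_j^{(OLS)}$, we get $E[\tilde{\mu}_{\beta_j}]=\frac{\kappa}{\kappa+1}E\!\left[\widehat{\beta}_j^{(OLS)}\right]=\frac{\kappa}{\kappa+1}\cdot 0=0=\beta_j$ irrespective of the value of the factor. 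Assembling both cases over all $j$ gives $E[\tilde{\bmu}_{\bbeta}]=\bbeta$.

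The main obstacle is conceptual rather than computational: strictly speaking $\widehat{g}_j$ is itself a function of the data through $\xjty$, so $\tilde{\mu}_{\beta_j}$ is a product of two data-dependent quantities and $E[\tilde{\mu}_{\beta_j}]$ does not factor as the product of their expectations. The delicate step is therefore justifying that the shrinkage factor may be replaced by its limiting deterministic value before the expectation is taken. This is exactly where the limiting conclusion of the theorem does the work: because $\widehat{g}_j$ concentrates at the endpoints $0^+$ or $1^-$, the factor $\kappa/(\kappa+\widehat{g}_j^2)$ converges to the constant $1$ or $\kappa/(\kappa+1)$, after which it decouples from $\widehat{\beta}_j^{(OLS)}$ and the interchange becomes exact in the limit. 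I would state the result in this limiting sense, noting that for unpromising variables unbiasedness in fact holds exactly for \emph{any} $g_j$, since $\beta_j=0$ annihilates the shrinkage factor entirely.
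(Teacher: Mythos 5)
Your proposal is correct and follows essentially the same route as the paper's proof: reduce $\tilde{\bmu}_{\bbeta}$ under orthogonality to the component-wise shrinkage $\tilde{\mu}_j=\frac{\kappa}{\kappa+g_j^2}\,\widehat{\beta}_j^{(OLS)}$, then invoke the theorem's endpoint limits to handle promising variables ($g_j\rightarrow 0$, factor $\rightarrow 1$, unbiasedness inherited from OLS) and unpromising variables ($\beta_j=0$) separately. If anything, yours is slightly more careful than the paper's: your factor $\kappa/(\kappa+g_j^2)$ is the one consistent with~\eqref{eqn:orthog} (the paper's proof writes $\kappa/(\kappa+g_j)$, evidently dropping the square, though the limiting values coincide), your unpromising case rests on the exact identity $\mathbb{E}\bigl[\widehat{\beta}_j^{(OLS)}\bigr]=\beta_j=0$ rather than the paper's vaguer claim that $\widehat{\beta}_j^{(OLS)}$ ``converges to $0$ itself,'' and you explicitly flag the data-dependence of $\widehat{g}_j$ and the resulting interchange-of-limits issue, which the paper's proof silently glosses over.
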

\begin{proof}
Denote $\tilde{\mu}_j$ as the posterior mean of the $j$th variable based on the $\kappa-\mathbf{G}$ 
formulation. Under orthogonality, that is, $\xtx=diag(\xjtxj)$, the posterior mean of $\bbeta$ in~\eqref{eqn:postbetaparam} 
is simplified to 
\begin{equation}
\nonumber \tilde{\mu}_j = \frac{\kappa}{\kappa+g_j}\left(\xjtxj\right)^{-1}\xjty = \frac{\kappa}{\kappa+g_j}\widehat{\beta}_j^{(OLS)}.
\end{equation}
As was shown above, we have $g_j\longrightarrow 0$ for a promising variable. Therefore in this case 
\begin{equation}
\nonumber \tilde{\mu}_j \longrightarrow \frac{\kappa}{\kappa+0}\left(\xjtxj\right)^{-1}\xjty = \widehat{\beta}_j^{(OLS)}.
\end{equation}
Since $\widehat{\beta}_j^{(OLS)}$ is an unbiased estimator of $\beta_j$, $\tilde{\mu}_j$ is also unbiased. 

On the other hand, if $g_j\longrightarrow 1$, indicating the variable $\mathbf{x}_j$ does not belong to 
the true model and $\beta_j=0$, the quantity $\widehat{\beta}_j^{(OLS)}$ should capture the unpromising 
feature and converges to $0$ itself. Therefore the bias also vanishes in this case. \qedhere
\end{proof}

\section{Aspects of Computation} \label{sec3}
\subsection{Conditional density of $\sigma^2$ and $\kappa$}

We then introduce the conditional distribution of $\kappa$ and $\sigma^2$, which mostly serve for the computational 
purpose. From~\eqref{eqn:joint}, we obtain a closed-form expression of the conditional density of the 
scale parameter $\kappa$,
\begin{equation}
p\left(\kappa\left\vert\bbeta,\sigma^2,\mathbf{G},\mathbf{y}\right.\right) \sim IG(\tilde{\alpha},\tilde{\theta}),
\label{eqn:postkappa}
\end{equation}
where
\begin{equation}
\begin{split}
\nonumber \tilde{\alpha} &= \frac{p}{2}+\alpha\\
\tilde{\theta} &= \frac{1}{2\sigma^2}\left(\bbeta-\bbeta_0\right)^T\gxxg\left(\bbeta-\bbeta_0\right)+\theta.
\end{split}
\end{equation}

Likewise, the conditional density of $\sigma^2$ also has a closed-form expression given by
\begin{equation}
\begin{split}
p\left(\sigma^2\left\vert\bbeta,\kappa,\mathbf{G},\mathbf{y}\right.\right) \sim& IG\left(\frac{n+p}{2}\right.\\
 & \left.\frac{s^2}{2} + \frac{1}{2}\left(\bbeta-\widehat{\bbeta}\right)^T\xtx\left(\bbeta-\widehat{\bbeta}\right) + \frac{1}{2\kappa}\left(\bbeta-\bbeta_0\right)^T\gxxg\left(\bbeta-\bbeta_0\right)\right),
\end{split}
\label{eqn:postsigma}
\end{equation}
where
\begin{equation}
\begin{split}
\nonumber s^2 &= \left(\mathbf{y}-\mathbf{X}\widehat{\bbeta}\right)^T\left(\mathbf{y}-\mathbf{X}\widehat{\bbeta}\right)\\
\widehat{\bbeta} &= \bbeta^{(OLS)} = \left(\xtx\right)^{-1}\mathbf{X}^T\mathbf{y}.
\end{split}
\end{equation}

\subsection{A useful sampling algorithm}
In this $\kappa$-$\mathbf{G}$ formulation, there are four sets of parameters to be estimated from the data. 
Namely, $\bbeta$ and $\mathbf{G}$, each consisting of $p$ individual parameters, and $\kappa$ and $\sigma^2$. 
The MCMC method is very useful in this case to obtain the sample path of the parameters, and specifically, 
the Gibbs-sampler is a very convenient tool. However, Gibbs-sampler does require the conditional or 
posterior density of the parameters to be known, or of closed-form. As we have addressed before, the exact 
form of the posterior of $\mathbf{G}$ is unknown. Fortunately, the Gibbs sampling of $\mathbf{G}$ can be 
replaced by a Metropolis step, which only requires the density to be known to a proportion. For each draw of 
$\mathbf{G}$, the acceptance ratio is 
\begin{equation}
r = \frac{p\left(\mathbf{G}^{(*)}\left\vert\mathbf{y},\sigma^2,\kappa\right.\right)/J\left(\mathbf{G}^{(*)}\left\vert\mathbf{G}^{(t-1)}\right.\right)}{p\left(\mathbf{G}^{(t-1)}\left\vert\mathbf{y},\sigma^2,\kappa\right.\right)/J\left(\mathbf{G}^{(t-1)}\left\vert\mathbf{G}^{(*)}\right.\right)}
\label{eqn:r}
\end{equation}
where $p(\mathbf{G}\vert\cdot)$ is given by~\eqref{eqn:postg} and $J(\cdot)$ is the proposal distribution 
and is defined as
\begin{equation}
\nonumber J\left(\mathbf{G}^{(*)}\left\vert\mathbf{G}^{(t-1)}\right.\right) \sim \prod_{j=1}^n Beta\left(g_j^{(*)}\left\vert\cdot,\cdot\right.\right).
\end{equation}
Here we assume that the $g_j$'s within each draw are independent. The shape and scale parameters in 
$Beta\left(g_j^{(*)}\left\vert\cdot,\cdot\right.\right)$ may differ in various cases. As any typical Metropolis-Hastings algorithms, 
$\mathbf{G}^{(*)}$ is accepted as $\mathbf{G}^{(t)}$ with probability $min(1,r)$. Thus, the whole 
Metropolis-within-Gibbs algorithm is given in Algorithm~\ref{alg1}.
\begin{algorithm}
\SetKwInOut{Input}{Input}
\Input{data matrix $\mathbf{X}$, response $\mathbf{y}$, initial values $\bbeta^{(0)}$, $(\sigma^2)^{(0)}$, $\kappa^{(0)}$ and $\mathbf{G}^{(0)}$}
\For{$t=1$ \KwTo $T$}{
Update $\bbeta^{(t)}$ based on $p\left(\bbeta^{(t)}\left\vert\kappa^{(t-1)},(\sigma^2)^{(t-1)},\mathbf{G}^{(t-1)},\mathbf{y}\right.\right)$ as in~\eqref{eqn:postbeta}\;
Update $\kappa^{(t)}$ based on $p\left(\kappa^{(t)}\left\vert\bbeta^{(t)},(\sigma^2)^{(t-1)},\mathbf{G}^{(t-1)},\mathbf{y}\right.\right)$ as in~\eqref{eqn:postkappa}\;
Update $(\sigma^2)^{(t)}$ based on $p\left((\sigma^2)^{(t)}\left\vert\bbeta^{(t)},\kappa^{(t)},\mathbf{G}^{(t-1)},\mathbf{y}\right.\right)$ as in~\eqref{eqn:postsigma}\;
Accept $\mathbf{G}^{(t)}=\mathbf{G}^{(*)}$ with probability $min(1,r)$ as in~\eqref{eqn:r}\;
}
\caption{The $\kappa$-$G$ formulation for Bayesian variable selection}
\label{alg1}
\end{algorithm}

Notice that the sampling order, that is, which parameters are updated first each time, is mostly arbitrary. 
We choose to update $\mathbf{G}$ last merely because it involves a Metropolis step, which is more complex 
than the Gibbs steps.

In terms of varaible selection, we would expect the sample path of $g_j$'s of a promising variable to be 
severely skewed to the right within in the support of $(0,1)$, and vice versa. Or in terms of the posterior 
mean of $g_j$, given by
\begin{equation}
\mathbb{E}\left(g_j\vert\mathbf{y}\right) = \widehat{g_j}^{(Bayes)} = \frac{\sum_{t=1}^T g_j^{(t)}}{T},
\label{eqn:expg}
\end{equation}
a promising variable would have a $\widehat{g_j}^{(Bayes)}$ that is close to $0$, and an unpromising 
variable close to $1$.

\section{Numerical Examples and Discussion} \label{sec4}
\subsection{Simulations}

In this section we demonstrate our methodology with two simulated examples. First, consider again when 
$p=2$. $\mathbf{x}_1$ and $\mathbf{x}_2$ both have 30 observations and come from an i.i.d. $\mathcal{N}(0,1)$, 
and the true model is given by
\begin{equation}
\nonumber y_i = 2x_{i1} + \mathcal{N}(0,1).
\end{equation}
Here $\mathbf{x}_1$ is assumed to be the promising variable. Using Algorithm 1, we set the parameters as 
$a=b=0.5$, $\alpha=\theta=1$, and $T=100000$. In the Metropolis step, we use an independent uniform proposal 
distribution
\begin{equation}
\nonumber J\left(\mathbf{G}^{(*)}\left\vert\mathbf{G}^{(t-1)}\right.\right) \sim \prod_{j=1}^n Beta\left(g_j^{(*)}\left\vert 1,1\right.\right).
\end{equation}

Figure~\ref{fig:pequal2} provides a histogram of the sample path of $g_j$'s in the simulation. It is not 
surprising that $g_1$ is severely skewed to the right and concentrates toward $0$, which corresponds to 
$\mathbf{x}_1$ being promising, whereas $g_2$ is severely skewed to the left and concentrates toward $1$, 
corresponding to $\mathbf{x}_2$ being unpromising.
\begin{figure}[!htbp]
\centering
\includegraphics[width=1.0\textwidth]{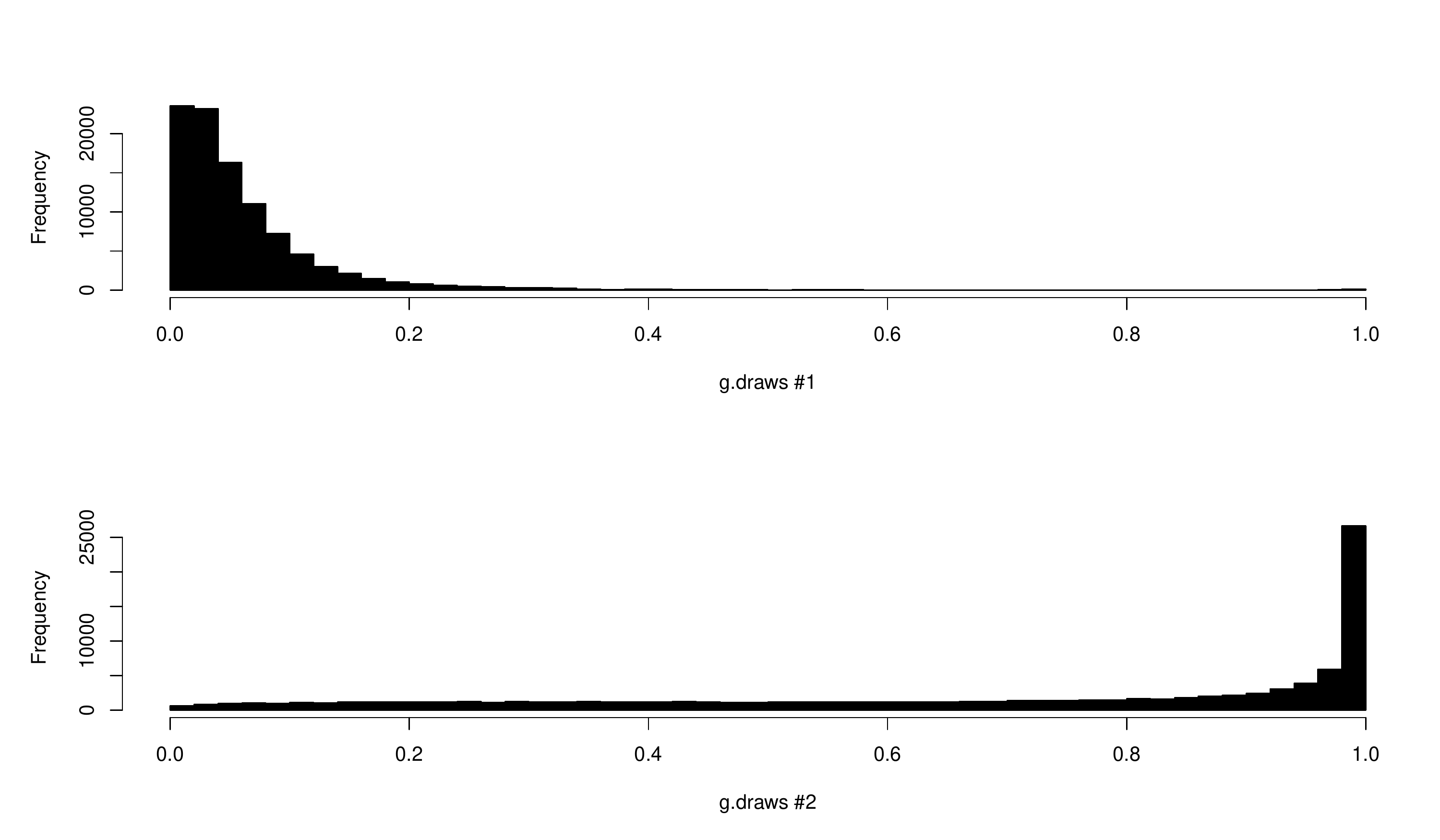}
\caption{Histogram of $g_j$, $p=2$}
\label{fig:pequal2}
\end{figure}
Table~\ref{tbl:pequal2} provides a numerical summary of the $g_j$'s. Due to its severe skewness, here 
we provide both the mean, denoted by $\widehat{g}_j$, and the median, denoted by $\tilde{g}_j$.
\begin{table}[!htbp]
\centering
\caption{Numerical Summary of $g_j$, $p=2$}
\renewcommand\arraystretch{1.5}
\begin{tabular}{c|c|c}
\hline
\textbf{Variable} & $\mathbf{\widehat{g}}_j$ & $\mathbf{\tilde{g}}_j$ \\ \hline
$\mathbf{x}_1$ & $\mathbf{.0682}$ & $\mathbf{.0427}$ \\ \hline
$\mathbf{x}_2$ & $.6986$ & $.8227$ \\ \hline
\end{tabular}
\label{tbl:pequal2}
\end{table}
The numerical summary of $g_j$ for each of the two variable reflects the theoretical deduction in Section 2.

The second example extends the dimensionality mildly to $p=10$. Still, all the predictors are i.i.d. from 
$\mathcal{N}(0,1)$. The true model is given by
\begin{equation}
\nonumber y_i = 2(x_{i1}+x_{i2}+x_{i8}) + \mathcal{N}(0,1).
\end{equation}
The set-up of the algorithm is mostly the same as in the previous example, except that the prior parameters 
of $g_j$ are $a=b=0.3$, instead of $0.5$. In this case, the ``U'' shape of the Beta prior is more strict than 
before. Also we have $T=10000$ in this case. Figure~\ref{fig:pequal10} provides a comparison of the sample 
path of the $g_j$'s. Again, we have $g_1$, $g_2$, and $g_8$ close to 0, which corresponds to the associated 
predictors in the true model.
\begin{figure}[!htbp]
\centering
\includegraphics[width=1.0\textwidth]{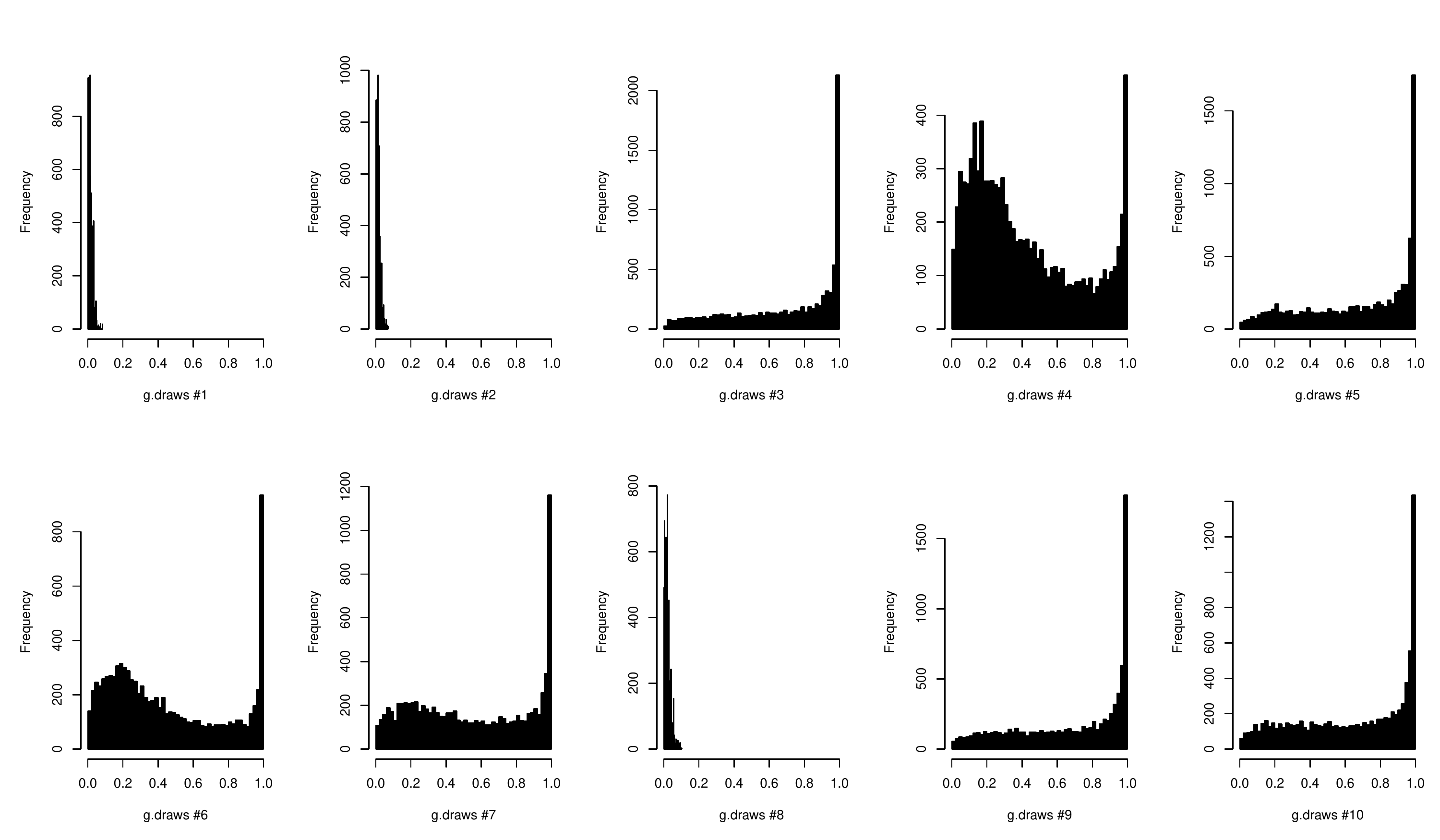}
\caption{Histogram of $g_j$, $p=10$}
\label{fig:pequal10}
\end{figure}

\subsection{Discussion}
In Section~\ref{sec1}, we introduced how this formulation is motivated by the posterior inclusion probability 
(PIP) and the relevance vector machine (RVM). Here we discuss these connections in detail using the 
simulations above. 

As stated before, the value $\widehat{g}_j$ or $\tilde{g}_j$ of a promising variable is close to $0$, so 
that the value of $1-\widehat{g}_j$ or $1-\tilde{g}_j$ is close to $1$. We can see to this quantity $1-\tilde{g}_j$ 
as an analogy to the PIP. However, since the procedure of computing PIP searches the space $\gamma_j\in\{0,1\}$, whereas 
the computation of $g_j$ searches the space $g_j\in(0,1)$, though both quantities are the average of 
their sample path, quite often the PIP equals to $1$ for a promising variable while the value of $\widehat{g}_j$ 
or $\tilde{g}_j$ can hardly be $0$.

Table~\ref{tbl:pequal10} summarizes the quantities $\widehat{g}_j$, $\tilde{g}_j$, $1-\widehat{g}_j$, 
$1-\tilde{g}_j$, and the corresponding PIP in the second simulation. 
\begin{table}[!htbp]
\centering
\caption{Numerical Summary of $g_j$, $p=10$}
\renewcommand\arraystretch{1.5}
\begin{tabular}{c|ccccc||c|ccccc}
\hline
 & $\mathbf{\widehat{g}}_j$ & $1-\mathbf{\widehat{g}}_j$ & $\mathbf{\tilde{g}}_j$ & $1-\mathbf{\tilde{g}}_j$ & $\mathbf{PIP}$ &  & $\mathbf{\widehat{g}}_j$ & $1-\mathbf{\widehat{g}}_j$ & $1-\mathbf{\tilde{g}}_j$ & $\mathbf{\tilde{g}}_j$ & $\mathbf{PIP}$ \\ \hline
$\mathbf{x}_1$ & $\mathbf{.0160}$ & $\mathbf{.9840}$ & $\mathbf{.0130}$ & $\mathbf{.9870}$ & $\mathbf{1.0000}$ & $\mathbf{x}_6$ & $.4517$ & $.5483$ & $.3630$ & $.6370$ & $.0475$ \\
$\mathbf{x}_2$ & $\mathbf{.0150}$ & $\mathbf{.9850}$ & $\mathbf{.0124}$ & $\mathbf{.9876}$ & $\mathbf{1.0000}$ & $\mathbf{x}_7$ & $.5510$ & $.4490$ & $.5364$ & $.4636$ & $.0403$ \\
$\mathbf{x}_3$ & $.6960$ & $.3040$ & $.7952$ & $.2048$ & $.0386$ & $\mathbf{x}_8$ & $\mathbf{.0196}$ & $\mathbf{.9804}$ & $\mathbf{.0157}$ & $\mathbf{.9843}$ & $\mathbf{1.0000}$ \\
$\mathbf{x}_4$ & $.4086$ & $.6914$ & $.3186$ & $.6814$ & $.0775$ & $\mathbf{x}_9$ & $.6762$ & $.3238$ & $.7713$ & $.2287$ & $.0366$ \\
$\mathbf{x}_5$ & $.6749$ & $.3251$ & $.7652$ & $.2348$ & $.0366$ & $\mathbf{x}_{10}$ & $.6352$ & $.3648$ & $.6992$ & $.3008$ & $.0606$ \\ \hline
\end{tabular}
\label{tbl:pequal10}
\end{table}
For the promising variable $\mathbf{x}_1$, $\mathbf{x}_2$, and $\mathbf{x}_8$, $g_j$'s are roughly 
$.01$ while the PIPs equal to $1$, and for the unpromising variables, $g_j$'s are far from $0$ while the 
PIPs are small. Also, it is of interest to notice that the promising variables selected by the two methods 
are identical although the two methodologies are of different origins.

We then consider the connection between the $\kappa$-$\mathbf{G}$ formulation and the relevance vector machine. 
One major similarity between the two is the role of the hyper-parameters. Both $g_j$'s in this paper and 
the $\alpha_j$'s in~\cite{tipping:2001} appear in the prior variance of $\bbeta$. In fact, both $g_j$ 
and $\alpha_j$ serves as the ``stabilizer''. That is, given a Gaussian prior centered at $0$, a large 
value of $g_j$ or $\alpha_j$ yields a high prior precision, or low prior variance of $\beta_j$, so that 
the prior of $\beta_j$ is essentially $0$. However, unlike~\cite{tipping:2001}, in which the prior variance 
of $\beta_j$ is solely $\alpha_j^{-1}$, $g_j$ is only part of the variance, so that it is not necessary 
to set $g_j\in(0,\infty)$, but only a bounded domain between $0$ and $1$ is sufficient. Also, in terms of 
sparsity, the $\kappa$-$\mathbf{G}$ is designed as a sparse machine, that is, we would expect that only 
a few variables affect the response by assigning a ``U-shaped'' Beta hyperprior to the parameter $g_j$.

It is also of interest to verify Corollary 2.1, which indicates, under orthogonality, the unbiasedness of 
$\widehat{\bbeta}^{(Bayes)}$ under this formulation. Table~\ref{tbl:beta} provides a comparison of $\widehat{\bbeta}^{(Bayes)}$ 
and $\widehat{\bbeta}^{(OLS)}$ in the second simulation.
\begin{table}[!htbp]
\centering
\caption{Comparison of $\widehat{\beta}_j^{(Bayes)}$ and $\widehat{\beta}_j^{(OLS)}$}
\renewcommand\arraystretch{1.5}
\begin{tabular}{c|cc||c|cc}
\hline
 & $\kappa$-$\mathbf{G}$ & \textbf{OLS} &  & $\kappa$-$\mathbf{G}$ & \textbf{OLS} \\ \hline
$\widehat{\beta}_1$ & $\mathbf{1.9901}$ & $\mathbf{1.9939}$ & $\widehat{\beta}_6$ & $-.0306$ & $-.0405$ \\ 
$\widehat{\beta}_2$ & $\mathbf{1.9836}$ & $\mathbf{1.9863}$ & $\widehat{\beta}_7$ & $-.0116$ & $-.0192$ \\
$\widehat{\beta}_3$ & $.0089$ & $.0190$ & $\widehat{\beta}_8$ & $\mathbf{1.9517}$ & $\mathbf{1.9545}$ \\
$\widehat{\beta}_4$ & $.0543$ & $.0709$ & $\widehat{\beta}_9$ & $-.0071$ & $-.0124$ \\
$\widehat{\beta}_5$ & $.0022$ & $.0023$ & $\widehat{\beta}_{10}$ & $.0323$ & $.0551$ \\ \hline
\end{tabular}
\label{tbl:beta}
\end{table}
Given the true values as $\beta_1=\beta_2=\beta_8=2$, with $10000$ iterations, the estimates from the methodology 
of this paper are very close to the OLS estimates.

\section{Conclusion}\label{sec5}
In this paper we have demonstrated a new methodology for Bayesian variable selection in linear model that is 
completely independent to the traditional indicator variable method. The coefficient vector $\bbeta$ is given 
a Gaussian prior with the form $\mathcal{N}_p\left(\mathbf{0},\kappa\sigma^2(\gxxg)^{-1}\right)$. By 
injecting a diagonal matrix $\mathbf{G}$ to the variance of the prior, each $g_j$ on the diagonal serves as 
a variance stabilizer such that the promising variables are selected based on the $g_j$'s that are close to 
$0$. Mathematically, under orthogonality, the $g_j$'s are independent and the posterior of each single $g_j$ 
is maximized in the support $(0,1)$ at $g_j\longrightarrow 0$ if the corresponding variable is promising, 
and vice versa. Further, the estimator of $\bbeta$ under orthogonality is asymptotically unbiased. Computationally, 
the hierarchical model is fitted using the Metropolis-within-Gibbs sampling method.

In Section~\ref{sec4}, we have demonstrated through two simulations the usefulness of this methodology 
under orthogonality. Though the dimensionalities in each simulation, $p=2$ and $p=10$ respectively, are 
very mild, the results have shown that this formulation is capable of variable selection and parameter 
estimation, both with considerable accuracy. The systematic or theoretical examination outside orthogonality 
is still remained undone, in which the main difficulty involves the inverse matrix $\invmat$. In conclusion, 
as it is completely independent of searching through the $2^p$ model space, this methodology has the 
potential of selecting variables with higher efficiency comparing to the traditional methodology and 
merits further interest and investigation.

\clearpage
\bibliography{bibliography}

\begin{thebibliography}{}

\bibitem[Agliari and Parisetti, 1988]{agliari:1988}
Agliari, A. and Parisetti, C. (1988).
\newblock {A-g Reference Informative Prior: A Note on Zellner's g-Prior}.
\newblock {\em {Journal of the Royal Statistical Society, Series D}},
  37(3):271--275.

\bibitem[Barbieri and Berger, 2004]{barbieri:2004}
Barbieri, M. and Berger, J. (2004).
\newblock {Optimal Predictive Model Selection}.
\newblock {\em The Annals of Statistics}, 32(3):870--897.

\bibitem[Bov\'e and Held, 2011]{bove:2011}
Bov\'e, D. and Held, L. (2011).
\newblock {Hyper g-Priors for Generalized Linear Models}.
\newblock {\em Bayesian Analysis}, 6(3):387--410.

\bibitem[Dey and Fokou\'e, 2015]{dey:2015}
Dey, T. and Fokou\'e, E. (2015).
\newblock {Bayesian Variable Selection for Predictive Optimal Regression}.
\newblock In {\em {Current Trends in Bayesian Methodology with Applications}}.
  Chapman and Hall.

\bibitem[Fernand\'ez et~al., 2001]{ley:2001}
Fernand\'ez, C., Ley, E., and Steel, M. (2001).
\newblock {Benchmark Priors for Bayesian Model Averaging}.
\newblock {\em Journal of Econometrics}, 100(2):381--427.

\bibitem[Fokou\'e, 2007]{fokoue:2007}
Fokou\'e, E. (2007).
\newblock {Estimation of Atom Prevalence for Optimal Prediction}.
\newblock {\em Contemporary Mathematics}, 447:103--129.

\bibitem[George and McCulloch, 1993]{george:1993}
George, E. and McCulloch, R. (1993).
\newblock {Variable Selection via Gibbs Sampling}.
\newblock {\em Journal of the American Statistical Association}, 85:398--409.

\bibitem[George and McCulloch, 1997]{george:1997}
George, E. and McCulloch, R. (1997).
\newblock {Approaches for Bayesian Variable Selection}.
\newblock {\em Statistical Sinica}.

\bibitem[Han and Carlin, 2001]{han:2001}
Han, C. and Carlin, B. (2001).
\newblock {Markov Chain Monte Carlo Methods for Computing Bayes Factor: A
  Comparative Review}.
\newblock {\em Journal of the American Statistical Association},
  96(455):1122--1132.

\bibitem[Jeffreys and Berger, 1991]{jeffreys:1991}
Jeffreys, W. and Berger, J. (1991).
\newblock {Sharpening Ockham's Razor on a Bayesian Strop}.
\newblock Technical report, University of Texas at Austin, Purdue University.

\bibitem[Liang et~al., 2008]{liang:2008}
Liang, F., Paulo, R., Molina, G., Clyde, M., and Berger, J. (2008).
\newblock {Mixtures of g-Priors for Bayesian Variable Selection}.
\newblock {\em Journal of the American Statistical Association},
  103(481):410--423.

\bibitem[Maruyama and George, 2011]{maruyama:2011}
Maruyama, Y. and George, E. (2011).
\newblock {Fully Bayes Factors with a Generalized g-Prior}.
\newblock {\em The Annals of Statistics}, 39(5):2740--2765.

\bibitem[Miller, 2002]{miller:2002}
Miller, A. (2002).
\newblock {\em {Subset Selection in Regression}}.
\newblock Chapman \& Hall/CRC.

\bibitem[O'Hara and Sillanp\"a\"a, 2011]{ohara:2011}
O'Hara, R. and Sillanp\"a\"a, M. (2011).
\newblock {A Review of Bayesian Variable Selection: What, How, and Which}.
\newblock {\em Bayesian Analysis}, 4(1):85--118.

\bibitem[Tipping, 2001]{tipping:2001}
Tipping, M. (2001).
\newblock {Sparse Bayesian Learning and the Relevance Vector Machine}.
\newblock {\em Journal of Machine Learning Research}, 1:211--244.

\bibitem[Tipping, 2004]{tipping:2004}
Tipping, M. (2004).
\newblock {Bayesian Inference: An Introduction to Principles and Practice in
  Machine Learning}.
\newblock In Bousquet, O., von Luxburg, U., and R\"atsch, G., editors, {\em
  {Advanced Lectures on Machine Learning}}, pages 41--62. Springer.

\bibitem[Ullah and Wang, 2013]{ullah:2013}
Ullah, A. and Wang, H. (2013).
\newblock {Parametric and Nonparametric Frequentist Model Selection and Model
  Averaging}.
\newblock {\em Econometrics}, 1(2):157--179.

\bibitem[Walli and Wagner, 2011]{walli:2011}
Walli, G. and Wagner, H. (2011).
\newblock {Comparing Spike and Slab Priors for Bayesian Variable Selection}.
\newblock {\em Austrian Journal of Statistics}, 40(4):241--264.

\bibitem[Zellner, 1986]{zellner:1986}
Zellner, A. (1986).
\newblock {On Assessing Prior Distributions and Bayesian Regression Analysis
  with g-Prior Distributions}.
\newblock In Goel, P. and Zellner, A., editors, {\em {Bayesian Inference and
  Decision Techniques: Essays in Honor of Bruno de Finetti}}. Amsterdam:
  North-Holland/Elsevier.

\end{thebibliography}
\end{document}